\documentclass[conference]{IEEEtran}
\IEEEoverridecommandlockouts
\usepackage{cite}
\usepackage{amsmath,amssymb,amsfonts,amsthm}
\usepackage{graphicx}
\usepackage{booktabs}
\usepackage{tikz}
\usetikzlibrary{arrows.meta,positioning,fit}
\usepackage{microtype}
\usepackage{url}
\usepackage{balance}
\usepackage{comment}
\newtheorem{theo}{Theorem}
\newtheorem{prop}{Proposition}

\newtheorem{defi}{Definition}
\theoremstyle{remark}

\newcommand{\X}{\mathcal{X}}
\newcommand{\C}{\mathcal{C}}
\newcommand{\Ssupp}{\mathcal{S}}
\newcommand{\Fset}{\mathcal{F}}

\usepackage[hidelinks]{hyperref}
\hypersetup{breaklinks=true}
\usepackage{algorithm}
\usepackage{algorithmic}

\begin{document}

\title{Support-Aware Telemetry Compression for 5G Positioning via Conditional Conflict Graphs}

\author{%
\IEEEauthorblockN{Mohammad Reza Deylam Salehi}
\IEEEauthorblockA{\textit{T{\'e}l{\'e}com SudParis, Institut Polytechnique de Paris}\thanks{This work received financial support from the French government under the EU Important Project of Common European Interest on Microelectronics and Communication Technologies (IPCEI ME/CT).}\\
Palaiseau, France\\
mdeylamsalehi@telecom-sudparis.eu}
\and
\IEEEauthorblockN{Hakima Chaouchi}
\IEEEauthorblockA{\textit{T{\'e}l{\'e}com SudParis, Institut Polytechnique de Paris}\\
Institut Mines-Télécom\\
Palaiseau, France\\
hakima.chaouchi@imt.fr}
}
\maketitle

\begin{abstract}
Geographically separated transmission/reception points (TRPs) report quantized measurements to a Location Management Function (LMF), even when the application requires only a coarse location region. We formulate this task as a distributed zero-error function-computation problem, in which each TRP transmits an index sufficient for the LMF
to reproduce the required service decision. Since positioning geometry induces a sparse and nonrectangular support, independently constructed per-terminal characteristic-graph colorings are not necessarily jointly decodable. We introduce a conditional conflict graph that exactly characterizes valid single-terminal updates and develop an alternating codebook construction that preserves global zero-error decodability. In a reproducible three-TRP study with range-equivalent timing measurements and $120$ native bins, all resulting codebooks satisfy an explicit decoder-conflict test. For service-cell sizes up to $100$ m, the achieved ideal rate is $5.38$--$5.44$ bits per epoch per TRP, compared with $6.91$ bits for raw reporting and $6.56$--$6.87$ bits for a globally valid interval-based baseline. A complementary measured six-base-station TDoA study shows that $83.77\%$ of the learned native support recurs on an independent trajectory. For these recurrent tuples, the codebooks preserve the service decision exactly and reduce the ideal rate by $22.8$--$35.6\%$ for $2$--$8$ m service grids. The feasible report-tuple set also provides a single-epoch geometric-consistency check under injected timing bias. Finally, we identify the NRPPa/OpenAirInterface integration points and safeguards required for experimental implementation.

\end{abstract}

\begin{IEEEkeywords}
5G positioning, LMF, NRPPa, distributed function computation, graph coloring, telemetry compression, TDoA traces, anomaly detection, OpenAirInterface.
\end{IEEEkeywords}

\section{Introduction}
\label{sec:intro}

5G New Radio (NR) supports UL-TDOA, UL-AoA, multi-RTT, and related positioning methods coordinated by the LMF~\cite{3gpp38305,italiano2025tutorial}. In an uplink timing procedure, a UE transmits a positioning SRS, and geographically separated TRPs obtain local quantities such as UL-RTOA. Their serving gNBs forward the measurements through NRPPa, with the AMF transparently transporting the NRPPa payload between the NG-RAN and the LMF~\cite{3gpp38455}. The physically relevant positioning anchors are the TRPs: sectors mounted on one tower provide little TDOA baseline, whereas geographically distributed TRPs may be controlled by a single logical gNB.

Existing signaling is measurement-oriented, but many applications are decision-oriented. A factory controller may need an aisle, an emergency service a search tile, and a mobility function a zone. Two native measurements that differ numerically need not be distinguished when they always produce the same consumed LMF output. This motivates communication for computing~\cite{witsenhausen1976zero, korner1971coding, orlitsky2001coding, deylam2025entropy, salehi2025non, salehi2023achievable}, where each reporting point communicates enough to evaluate the desired function, rather than enabling full source reconstruction.

A direct use of ordinary local characteristic graphs is unsafe for positioning because feasible range or timing tuples generally occupy only a sparse subset of the Cartesian alphabet. On such a non-rectangular support, independently valid local colorings can merge complete measurement tuples that require different LMF outputs. We address this joint-decodability problem through the following contributions:
\begin{itemize}
  \item An exact global zero-error criterion and a conditional conflict graph that characterizes valid updates of one TRP codebook on an arbitrary finite support.
  \item A support-aware alternating DSATUR algorithm that starts from native reporting, preserves zero-error after every accepted update, and is compared with a globally valid contiguous-interval baseline.
  \item A reproducible three-TRP positioning study with explicit decoder-conflict verification, complemented by a six-base-station measured-TDoA support-transfer check and a held-out single-epoch anomaly experiment.
  \item An NRPPa/LMF integration architecture and staged OAI validation path connecting the coding layer to an operational 5G positioning procedure.
\end{itemize}
The primary numerical evaluation is a coding-layer proof of concept rather than an RF-accuracy study. It uses deterministic range-equivalent timing bins to isolate the telemetry-compression problem. A complementary measured-data check in Sec.~\ref{sec:trace-transfer} evaluates recurrence of the learned finite support across independent six-base-station TDoA trajectories. Neither experiment constitutes end-to-end UL-RTOA/NRPPa validation; Sec.~\ref{sec:integration} outlines that implementation path.


{\bf Related Work and Scope}: Characteristic graphs originate in zero-error source coding with side information~\cite{witsenhausen1976zero,korner1971coding} and distributed function computation~\cite{orlitsky2001coding, alon1996source, salehi2023achievable, deylam2025entropy,salehi2025non}. Here, the desired function determines which source values must remain distinguishable. Positioning adds geometry-dependent feasible tuples and distributed TRP encoders with centralized LMF decoding. Prior OAI studies provide the UL-TDOA/NRPPa path~\cite{malik2024concept}, SRS telemetry extraction~\cite{mundlamuri2024tools}, and practical testbed observations~\cite{ahadi2025experimental, ahadi2026tcml, bouknana2026ran}. We build on these components with a support-aware reporting and geometric-consistency layer, without changing the SRS procedure or positioning estimator.

{\bf Organization:} Sec.~\ref{sec:sys} defines the zero-error model, Sec.~\ref{sec:cond-graph} presents the codebook construction, and Sec.~\ref{sec:evaluation} reports the
controlled and trace-driven evaluations. Secs.~\ref{sec:integration} and~\ref{sec:anomaly} describe the NRPPa/OAI path and the anomaly screen, respectively. Sec.~\ref{sec:conclu} reports the measured support-transfer check and concludes the paper.
\section{System Model and Zero-Error Requirement}
\label{sec:sys}

Consider one positioning epoch with $N$ reporting TRPs. TRP $i$ observes a native quantized symbol $X_i\in\X_i$, and the joint input $X=(X_1, X_2, \ldots,X_N)$ has finite feasible support
\begin{align}
  \Ssupp\subseteq \X_1\times \X_2\times\cdots\times\X_N\ .
\end{align}
The support incorporates the deployment geometry, quantization, and any nominal uncertainty included at design time. The uncompressed LMF applies a deterministic service function
\begin{align}
  \ell:\Ssupp\rightarrow\mathcal{Y}\ ,
\end{align}
where $\mathcal{Y}$ may represent a grid cell, service zone, or alarm-relevant decision (see Fig.~\ref{fig:sys}). The function $\ell$ includes the ordinary positioning estimator and any subsequent projection to the service grid.

\begin{figure}[h]
\centering
\includegraphics[width=\columnwidth]{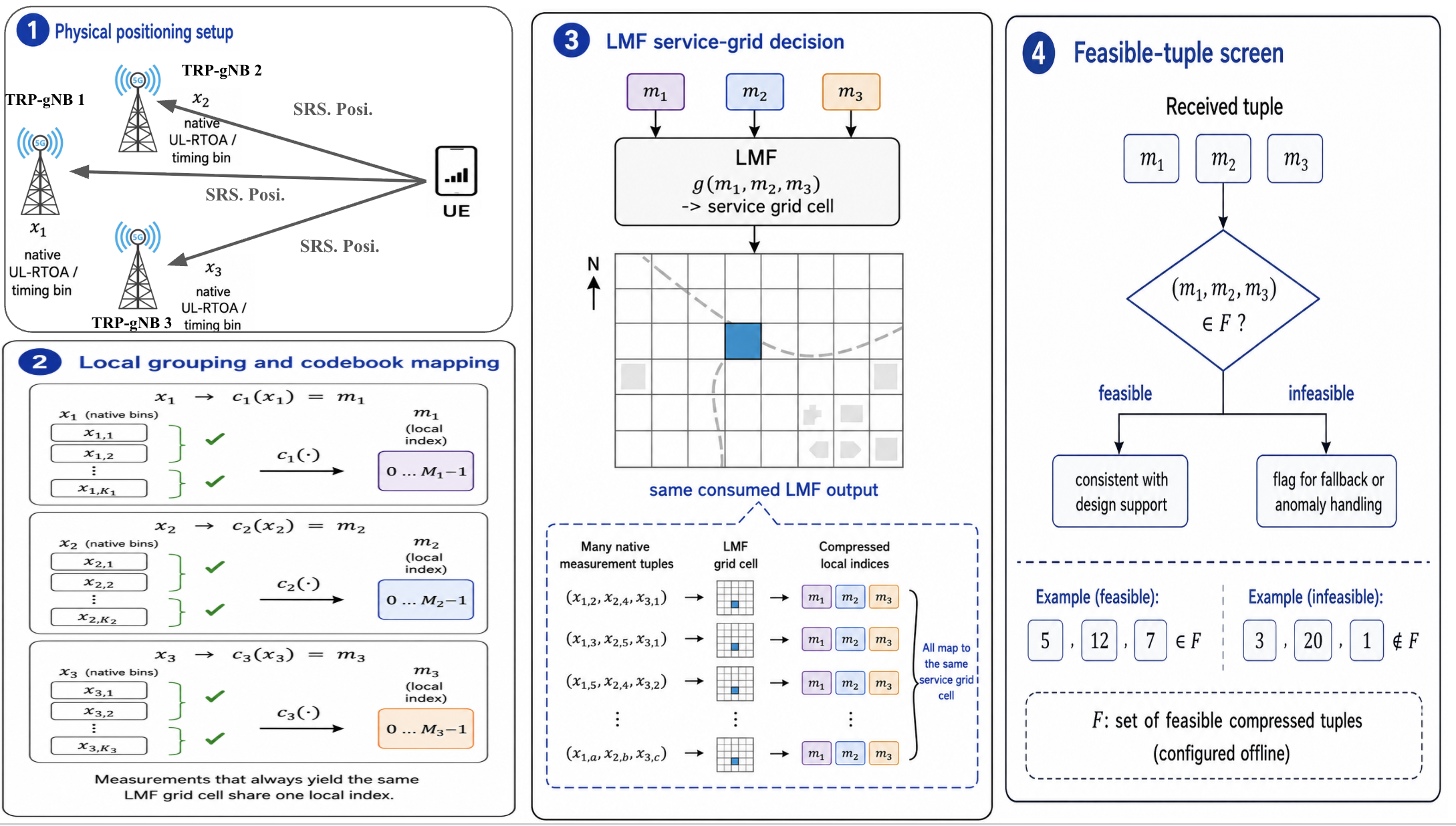}
\caption{System model of function-aware positioning telemetry. Native TRP measurements are quantized and mapped to local code indices. The LMF decodes the received index tuple into the same service-grid cell as the uncompressed pipeline. The feasible report-tuple set also enables single-epoch anomaly screening.}
\label{fig:sys}
\end{figure}

TRP $i$ sends $M_i=c_i(X_i)$ from a finite code alphabet $\C_i$. Define the aggregate report map
\begin{align}
\label{eq:aggregate-map}
 \mathbf{c}(x)\triangleq\bigl(c_1(x_1), c_2(x_2),\ldots,c_N(x_N)\bigr).
\end{align}
A decoder $g:\C_1\times\cdots\times\C_N\rightarrow\mathcal{Y}$ is zero-error on $\Ssupp$ when
\begin{align}
\label{eq:zero-error}
 g\bigl(\mathbf{c}(x)\bigr)=\ell(x),\qquad \forall x\in\Ssupp.
\end{align}
The idealized alphabet rate is $\log_2|\C_i|$ bits per epoch, whereas a fixed-width binary representation requires $\lceil\log_2|\C_i|\rceil$ bits. If the distribution of $c_i(X_i)$ is known, a binary prefix code can achieve the expected length below $H(c_i(X_i))+1$ bits~\cite{cover_thomas}. In the distributed setting considered here, any rate or entropy minimization must be restricted to local encoder collections satisfying the global validity condition detailed below.

\begin{prop}[Global product-partition criterion]
\label{prop:global}
The local encoders $(c_1, c_2,\ldots,c_N)$ admit a decoder satisfying~\eqref{eq:zero-error} if and only if, for every $x,x'\in\Ssupp$,
\begin{align}\label{eq:global}
 \mathbf{c}(x)=\mathbf{c}(x')
 \quad\rightarrow\quad
 \ell(x)=\ell(x').
\end{align}
\end{prop}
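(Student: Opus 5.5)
The plan is to prove the two directions separately. Both follow directly from the definition of a zero-error decoder in~\eqref{eq:zero-error}. The only real content is constructing a decoder in the sufficiency direction, and making sure it is a well-defined function on the whole product alphabet $\C_1\times\cdots\times\C_N$.

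\emph{Necessity.} Suppose a decoder $g$ satisfies~\eqref{eq:zero-error}, and take any $x,x'\in\Ssupp$ with $\mathbf{c}(x)=\mathbf{c}(x')$. Applying~\eqref{eq:zero-error} at both points gives
\begin{align*}
\ell(x)=g\bigl(\mathbf{c}(x)\bigr)=g\bigl(\mathbf{c}(x')\bigr)=\ell(x'),
\end{align*}
which is exactly~\eqref{eq:global}. No further structure of $\Ssupp$ or of the $c_i$ is needed.

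\emph{Sufficiency.} Assume~\eqref{eq:global} holds. Let $\mathcal{R}\triangleq\mathbf{c}(\Ssupp)\subseteq\C_1\times\cdots\times\C_N$ be the set of feasible report tuples. For each $m\in\mathcal{R}$, choose any $x\in\Ssupp$ with $\mathbf{c}(x)=m$ and set $g(m)\triangleq\ell(x)$.

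This is the step I expect to need care: showing that $g$ is well defined. If two preimages $x,x'$ of the same $m$ are chosen, then $\mathbf{c}(x)=\mathbf{c}(x')$, so~\eqref{eq:global} forces $\ell(x)=\ell(x')$. Hence $g(m)$ does not depend on the choice of preimage. Equivalently, $\ell$ is constant on each fiber $\mathbf{c}^{-1}(m)\cap\Ssupp$ and therefore factors through $\mathbf{c}$ on $\Ssupp$.

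For tuples $m\notin\mathcal{R}$, which can occur because $\Ssupp$ need not be rectangular, define $g(m)$ to be an arbitrary fixed element of $\mathcal{Y}$. The set $\mathcal{Y}$ is nonempty whenever $\Ssupp$ is, and if $\Ssupp=\emptyset$ the statement is vacuous. These values never affect~\eqref{eq:zero-error}, since that condition is only required on $\Ssupp$.

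Finally, for any $x\in\Ssupp$ we have $\mathbf{c}(x)\in\mathcal{R}$, so $g(\mathbf{c}(x))=\ell(x)$ by construction, which establishes~\eqref{eq:zero-error}.

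I would close with a remark. The criterion is a condition on the joint map $\mathbf{c}$, that is, on the product partition $\{c_1^{-1}(a_1)\times\cdots\times c_N^{-1}(a_N)\}$ intersected with $\Ssupp$. It is not a per-coordinate condition, which is why local characteristic-graph colorings need not satisfy it on a sparse support. This motivates the conditional conflict graph developed next.
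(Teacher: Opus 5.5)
Your proof is correct and follows the paper's argument essentially verbatim. Necessity is the chain $\ell(x)=g(\mathbf{c}(x))=g(\mathbf{c}(x'))=\ell(x')$, and sufficiency defines $g$ on $\mathbf{c}(\Ssupp)$ through an arbitrary preimage, which is well defined by~\eqref{eq:global}, with arbitrary values off $\mathbf{c}(\Ssupp)$. (Minor aside: when $\Ssupp=\emptyset$ the statement is not vacuous, since a decoder $g$ exists only if $\mathcal{Y}\neq\emptyset$, an assumption the paper also makes implicitly.)
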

\begin{proof}
First, suppose that a zero-error decoder $g$ exists. If $\mathbf{c}(x)=\mathbf{c}(x')$, then
\begin{align*}
 \ell(x)=g\bigl(\mathbf{c}(x)\bigr)
 =g\bigl(\mathbf{c}(x')\bigr)=\ell(x'),
\end{align*}
which proves~\eqref{eq:global}.

Conversely, suppose that~\eqref{eq:global} holds. For every reachable report tuple $m\in\mathbf{c}(\Ssupp)\triangleq\{\mathbf{c}(x):x\in\Ssupp\}$, define $g(m)=\ell(x)$ for any $x\in\Ssupp$ satisfying $\mathbf{c}(x)=m$. Condition~\eqref{eq:global} guarantees that this definition is well defined. The value of $g$ on report tuples outside $\mathbf{c}(\Ssupp)$ may be assigned arbitrarily. The resulting decoder satisfies~\eqref{eq:zero-error}.
\end{proof}
The support structure is essential. Consider $\Ssupp=\{(0,0),(1,1)\}$ with $\ell(0,0)\neq\ell(1,1)$. For each terminal, the ordinary characteristic graph contains no edge between its symbols $0$ and $1$, because the two symbols never occur with a common value of the other terminal. Thus, a one-color assignment is proper for each local graph. However, these assignments yield $\mathbf{c}(0,0)=\mathbf{c}(1,1)$ although the corresponding function values differ, violating~\eqref{eq:global}. Hence, independent proper coloring of the ordinary local characteristic graphs is not sufficient for general non-rectangular supports.

\section{Conditional Conflict Graph Coding}
\label{sec:cond-graph}

Fix all codebooks except that of terminal $i$. Let $x_{-i}\triangleq(x_j)_{j\neq i}$ and $c_{-i} \triangleq(c_j)_{j\neq i}$ denote, respectively, the other local symbols and encoders. Define their aggregate report map by
\begin{align}
\label{eq:aggregate-minus-i}
 \mathbf{c}_{-i}(x_{-i})\triangleq\bigl(c_j(x_j)\bigr)_{j\neq i}\ .
\end{align}

\begin{defi}[Conditional conflict graph]
\label{def:conditional}
For fixed $c_{-i}$, define the graph $H_i(c_{-i})=G(\X_i,E_i)$, which may contain self-loops. An edge joins $a,b\in\X_i$ if there exist $x,x'\in\Ssupp$ such that $x_i=a$, $x_i'=b$, and
\begin{align}
\label{eq:theo-1-pair}
 \mathbf{c}_{-i}(x_{-i})=\mathbf{c}_{-i}(x_{-i}'),
 \qquad \ell(x)\neq\ell(x').
\end{align}
If $H_i(c_{-i})$ contains a self-loop, no choice of $c_i$ can make the complete encoder collection zero-error while $c_{-i}$ remains fixed. Otherwise, $c_i$ is proper if $c_i(a)\neq c_i(b)$ for every edge joining distinct vertices $a$ and $b$.
\end{defi}
With identity codebooks at all other terminals, $H_i(c_{-i})$ reduces to the ordinary characteristic graph. The following theorem gives an exact validity condition for updating one terminal.

\begin{theo}[Exact one-terminal update]
\label{thm:update}
For fixed $c_{-i}$, the complete encoder collection $(c_i,c_{-i})$ is zero-error on $\Ssupp$ if and only if $H_i(c_{-i})$ has no self-loop and $c_i$ is a proper coloring of its non-loop edges.
\end{theo}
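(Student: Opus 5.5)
The plan is to reduce the statement to Proposition~\ref{prop:global} and then unpack condition~\eqref{eq:global} one terminal at a time. The first step is to observe that $\mathbf{c}(x)=\mathbf{c}(x')$ holds if and only if $c_i(x_i)=c_i(x_i')$ and $\mathbf{c}_{-i}(x_{-i})=\mathbf{c}_{-i}(x_{-i}')$. This follows directly from the definition of the aggregate map~\eqref{eq:aggregate-map} and of $\mathbf{c}_{-i}$ in~\eqref{eq:aggregate-minus-i}, because the tuples agree exactly when every coordinate agrees. With this in hand, the collection $(c_i,c_{-i})$ fails the criterion~\eqref{eq:global} exactly when there is a \emph{violating pair}: some $x,x'\in\Ssupp$ with $\mathbf{c}_{-i}(x_{-i})=\mathbf{c}_{-i}(x_{-i}')$, $\ell(x)\neq\ell(x')$, and $c_i(x_i)=c_i(x_i')$.

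For the ``only if'' direction, I would argue by contrapositive and exhibit a violating pair whenever the graph condition fails. There are two cases.
\begin{itemize}
  \item If $H_i(c_{-i})$ has a self-loop at $a$, Definition~\ref{def:conditional} supplies $x,x'\in\Ssupp$ with $x_i=x_i'=a$ satisfying~\eqref{eq:theo-1-pair}. Then $c_i(x_i)=c_i(x_i')$ holds trivially, whatever $c_i$ is, so $(x,x')$ is a violating pair. This also justifies the remark in the definition that no choice of $c_i$ can help.
  \item If instead $c_i(a)=c_i(b)$ for some non-loop edge $\{a,b\}$, the witnesses $x,x'$ of that edge again form a violating pair.
\end{itemize}
In either case Proposition~\ref{prop:global} rules out a zero-error decoder.

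For the ``if'' direction, I would assume there is no self-loop and that $c_i$ is proper, and suppose for contradiction that a violating pair $x,x'$ exists. Set $a=x_i$ and $b=x_i'$. The pair witnesses~\eqref{eq:theo-1-pair}, so $a$ and $b$ are adjacent in $H_i(c_{-i})$. If $a=b$, this adjacency is a self-loop, which contradicts the hypothesis. If $a\neq b$, properness gives $c_i(a)\neq c_i(b)$, which contradicts $c_i(x_i)=c_i(x_i')$. Hence~\eqref{eq:global} holds, and Proposition~\ref{prop:global} yields the decoder.

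There is no real obstacle here. The only step needing care is the bookkeeping of the $a=b$ case, which is exactly what the self-loop clause in Definition~\ref{def:conditional} is designed to capture. I would also check that the edge relation is symmetric, so that treating $H_i$ as an undirected graph is legitimate: swapping $x$ and $x'$ preserves both conditions in~\eqref{eq:theo-1-pair}.
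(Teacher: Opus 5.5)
Your proposal is correct and follows essentially the same argument as the paper. Both reduce to Proposition~\ref{prop:global} and use the witnesses of a monochromatic edge or self-loop as a violating pair; for the converse, both take a violating pair $(x,x')$, note that $x_i$ and $x_i'$ are adjacent in $H_i(c_{-i})$, and split on $x_i=x_i'$ (self-loop) versus $x_i\neq x_i'$ (improper coloring), with your coordinate-wise decomposition of $\mathbf{c}(x)=\mathbf{c}(x')$ and symmetry check simply making explicit what the paper leaves implicit.
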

\begin{proof}
If $c_i$ assigns the same color to the endpoints of a conflict edge, the corresponding support points $x,x'$ in Definition~\ref{def:conditional} satisfy $\mathbf{c}(x)=\mathbf{c}(x')$ and $\ell(x)\neq\ell(x')$, violating Proposition~\ref{prop:global}. A self-loop yields the same contradiction for every possible choice of $c_i$.

Conversely, suppose that $(c_i,c_{-i})$ is not zero-error. By Proposition~\ref{prop:global}, there exist $x,x'\in\Ssupp$ such that $\mathbf{c}(x)=\mathbf{c}(x')$ and $\ell(x)\neq\ell(x')$. Let $a=x_i$ and $b=x_i'$. Equality of the reports from all other terminals implies that $a$ and $b$ are adjacent in $H_i(c_{-i})$, while equality of the complete report tuples implies that $c_i(a)=c_i(b)$. If $a=b$, then $H_i(c_{-i})$ contains a self-loop. If $a\neq b$, two adjacent vertices receive the same color, contradicting the coloring properness of $c_i$.
\end{proof}

\subsection{Alternating construction and verification}
\label{sec:alt-const-verfi}



Identity codebooks are initially zero-error. During offline construction, for a selected terminal order, Algorithm~\ref{alg:construction} updates each terminal using its conditional conflict graph and accepts an update only after an explicit global decoder check.

\begin{algorithm}[h]
\caption{Support-aware alternating codebook construction}
\label{alg:construction}
\scriptsize
\begin{algorithmic}[1]
\REQUIRE Support $\Ssupp$, labels $\ell$, terminal order $\pi$,
number of passes $P$
\ENSURE Codebooks $(c_1,\ldots,c_N)$, decoder $g$, feasible set
$\Fset$
\STATE Initialize $c_i(x_i)\gets x_i$ for all $i$ and $x_i\in\X_i$
\FOR{$p=1,\ldots,P$}
  \FOR{$i$ in order $\pi$}
    \STATE Group $x\in\Ssupp$ by $\mathbf c_{-i}(x_{-i})$
    \STATE Construct $H_i(c_{-i})$ from output disagreements
    \IF{$H_i(c_{-i})$ has no self-loop}
      \STATE Color $H_i(c_{-i})$ by DSATUR to obtain
      $\widetilde c_i$
      \IF{$(\widetilde c_i,c_{-i})$ satisfies
      Proposition~\ref{prop:global}}
        \STATE $c_i\gets\widetilde c_i$
      \ENDIF
    \ENDIF
  \ENDFOR
\ENDFOR
\STATE Construct $g$ from $\mathbf c(\Ssupp)$
\STATE $\Fset\gets\mathbf c(\Ssupp)$
\RETURN $(c_1,\ldots,c_N),g,\Fset$
\end{algorithmic}
\end{algorithm}

We run Algorithm~\ref{alg:construction} for all six permutations of the three terminals and retain the globally valid construction with the smallest sum of ideal rates. Because DSATUR~\cite{brelaz1979new} is heuristic, the reported cardinalities $K_i$ are achievable codebook sizes rather than the claimed chromatic numbers or global optima. The reported controlled experiment uses $P=3$ alternating passes for each terminal order.

The construction is performed offline at the LMF or an associated
controller. After selecting the final codebooks, TRP~$i$ receives only its local native-symbol-to-color lookup table $c_i$, together with a codebook identifier, activation information, and fallback rules. The LMF retains the joint decoder $g$, and the feasible report-tuple set $\Fset=\mathbf c(\Ssupp)$. During operation, the
TRPs encode independently and do not exchange measurements or coordinate their reports. For the contiguous-interval baseline, the same conditional graphs are used, but every color class must be a consecutive interval of native bins. The implementation greedily selects the longest conflict-free prefix and applies the same complete decoder check after every update. To avoid a quadratic scan over the complete support, equal other-terminal report tuples are identified by hashing. Edges are added only within these equal-context groups, followed by a complete support scan that provides an executable zero-error certificate.

\section{Reproducible Positioning Study}
\label{sec:evaluation}

\subsection{Geometry, quantization, and LMF function}

Three spatially separated TRPs are placed at $(0,0)$, $(1000,0)$, and $(0,1000)$ m. A $220\times220$ grid of UE locations forms the design set. Each propagation distance is quantized into $M=120$ bins over $[0,\sqrt{2}\,\mathrm{km}]$, giving an $11.8$ m range-equivalent timing resolution and raw ideal rate $\log_2(120)=6.907$ bits/TRP.

For native bin centers $(r_1,r_2,r_3)$, the LMF computes
\begin{align}
 \widehat p_x=\frac{r_1^2-r_2^2+10^6}{2000},\quad
 \widehat p_y=\frac{r_1^2-r_3^2+10^6}{2000}\ ,
\end{align}
projects the estimate onto the $1\,\mathrm{km}\times1\,\mathrm{km}$ service region, and assigns it to a square cell of length $\Delta$. The $48,400$ sampled locations produce $12,772$ distinct native measurement tuples. This synchronized range-equivalent model provides a finite approximation of timing telemetry. It omits the common UE transmission-time term, receiver-clock bias, NLOS effects, and SRS estimation errors present in a complete UL-TDOA model.

\subsection{Controlled compression results}
Fig.~\ref{fig:rate} and Table~\ref{tab:rate} report average ideal rates. Every proposed and baseline point is validated by constructing the complete report-tuple decoder over all $12,772$ tuples; the conflict count is zero throughout.

\begin{figure}[t]
  \centering
  \includegraphics[width=\columnwidth]{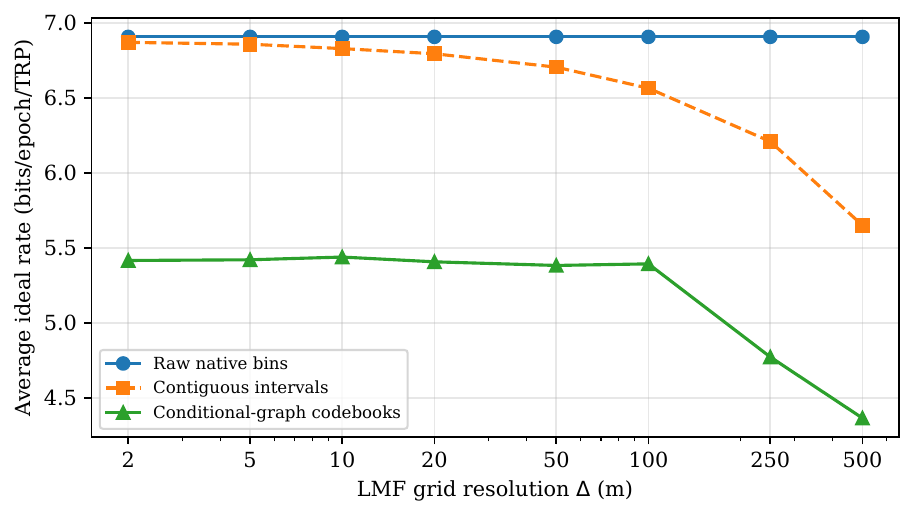}
  \caption{Globally validated ideal reporting rates versus service-grid resolution.}
  \label{fig:rate}
\end{figure}

\begin{table}[t]
\centering
\caption{Validated function-aware codebooks and rates.}
\label{tab:rate}
\scriptsize
\setlength{\tabcolsep}{2.7pt}
\begin{tabular}{@{}rcrrr@{}}
\toprule
$\Delta$ (m) & $(K_1,K_2,K_3)$ & $R_{\rm fun}$ & $R_{\rm int}$ & save raw \\
\midrule
2   & $(109,7,102)$ & 5.416 & 6.870 & 21.6\%\\
5   & $(109,7,103)$ & 5.421 & 6.858 & 21.5\%\\
10  & $(110,7,106)$ & 5.439 & 6.829 & 21.3\%\\
20  & $(107,7,102)$ & 5.407 & 6.794 & 21.7\%\\
50  & $(107,7,97)$  & 5.383 & 6.705 & 22.1\%\\
100 & $(103,7,103)$ & 5.393 & 6.565 & 21.9\%\\
250 & $(82,3,83)$   & 4.773 & 6.209 & 30.9\%\\
500 & $(54,3,54)$   & 4.365 & 5.651 & 36.8\%\\
\bottomrule
\end{tabular}
\vspace{-0.5cm}
\end{table}

For $\Delta\leq100$ m, function-aware coding saves $21.3$--$22.1\%$ of the ideal raw rate. The corresponding average per-TRP fixed-field width is $5.67$ bits, a $19.0\%$ reduction from the $7$-bit native field. The valid interval baseline remains close to raw reporting because many function-relevant equivalence classes are non-contiguous. The selected order $(2,1,3)$ yields a strongly asymmetric allocation, with $K_2$ much smaller than $K_1$ and $K_3$. This does not indicate that TRP~2 is intrinsically less informative. Because it is updated first while the other terminals retain identity codebooks, its conditional graph is formed with maximally detailed side context and admits aggressive merging. Once TRP~2 has been compressed, later updates are constructed using coarser other-terminal contexts, which creates denser conditional graphs and requires more colors. In particular, TRPs~2 and~3 are geometrically symmetric in the present deployment, so the observed difference is primarily an effect of update order and heuristic coloring. The controlled geometry isolates the coding mechanism. We next examine whether the learned finite-support structure also recurs across independent measured TDoA trajectories.

\subsection{Trace-driven support transfer}
\label{sec:trace-transfer}
We also apply the construction to the Fraunhofer IIS indoor 5G dataset, which provides synchronized TDoA measurements from six downlink base stations and separate training and test trajectories~\cite{fraunhofer5gtdoa}. Each measurement is quantized to $B=24$ symbols, and the quantizers, support, codebooks, and decoder are constructed only from the $18{,}863$ training bursts. A deterministic fingerprinting estimator maps every native tuple to a two-dimensional service-grid cell.

Among the $15{,}722$ independent test bursts, $83.77\%$ reproduce a native tuple contained in the training support. By Proposition~\ref{prop:global}, every recurrent tuple yields the same service decision under native and compressed reporting. The remaining $16.23\%$ fall outside the certified support and are therefore excluded from the comparison. Table~\ref{tab:trace-transfer} reports the corresponding rates. This experiment evaluates measured-support recurrence and coding rate rather than end-to-end NRPPa transport. It also uses downlink TDoA measurements rather than uplink UL-RTOA.

\begin{table}[t]
\centering
\caption{Trace-driven coding with $B=24$ symbols per base station.
Independent-test support recurrence is $83.77\%$.}
\label{tab:trace-transfer}
\scriptsize
\setlength{\tabcolsep}{4.0pt}
\begin{tabular}{@{}rrrr@{}}
\toprule
$\Delta$ &
$R_{\rm raw}$ &
$R_{\rm fun}$ &
Saving \\
(m) &
\multicolumn{2}{c}{(bit/epoch/BS)} &
(\%) \\
\midrule
2 & 4.585 & 3.538 & 22.8 \\
4 & 4.585 & 3.215 & 29.9 \\
8 & 4.585 & 2.954 & 35.6 \\
\bottomrule
\end{tabular}
\vspace{-0.5cm}
\end{table}

\section{NRPPa Integration and OAI Testbed Path}
\label{sec:integration}

Fig.~\ref{fig:flow} maps the proposed reporting layer onto the existing UE--TRP/gNB--AMF--LMF positioning transaction. The ordinary procedure configures and receives the positioning SRS and extracts a native UL-RTOA-like symbol. The proposed operations are a local codebook lookup, transport of a codebook identifier and color index, LMF decoding, and a feasible-tuple check. We refer to each input as a TRP report, although the lookup may be implemented in the serving gNB processing that TRP's measurement.
\begin{figure}[t]
  \centering
  \resizebox{\columnwidth}{!}{
\begin{tikzpicture}[
  actor/.style={font=\bfseries\small},
  lifeline/.style={black!40,thin},
  existing/.style={-{Latex[length=2mm]},black,thick},
  ours/.style={-{Latex[length=2mm]},blue!65!black,very thick},
  lbl/.style={font=\scriptsize,align=center},
  ourslbl/.style={font=\scriptsize\bfseries,align=center,blue!65!black}
]
\def\ytop{8.6}
\def\ybot{0.0}
\foreach \x/\name in {0/UE, 3.1/TRP--gNB, 6.2/AMF, 9.0/LMF} {
  \node[actor] at (\x,\ytop+0.4) {\name};
  \draw[lifeline] (\x,\ytop) -- (\x,\ybot);
}
\draw[existing] (9.0,8.0) -- (6.2,8.0) node[midway,above,lbl]{Positioning Information Request};
\draw[existing] (6.2,7.55) -- (3.1,7.55) node[midway,above,lbl]{UE-associated NGAP/NRPPa transport};
\draw[existing] (3.1,6.9) -- (0,6.9) node[midway,above,lbl]{SRS resource configuration (RRC)};
\draw[existing] (0,6.3) -- (3.1,6.3) node[midway,above,lbl]{Positioning SRS transmission};
\draw[existing] (3.1,5.9) to[out=20,in=-20,looseness=6] (3.1,5.5);
\node[lbl,right] at (3.35,5.7) {PHY: SRS estimate $\rightarrow$ native UL-RTOA $x_i$};
\node[lbl,right,black!60] at (3.35,5.30) {(e.g., FAPI indication / \texttt{T\_tracer})};
\draw[ours] (3.1,4.9) to[out=20,in=-20,looseness=6] (3.1,4.5);
\node[ourslbl,right] at (3.35,4.7) {codebook lookup $c_i(x_i)\rightarrow m_i$};
\node[lbl,right,black!60] at (3.35,4.30) {(codebook ID, color index, status)};
\draw[ours] (3.1,3.7) -- (6.2,3.7) node[midway,above,ourslbl]{Measurement Report: experimental extension $[\mathrm{ID},m_i]$};
\draw[ours] (6.2,3.25) -- (9.0,3.25) node[midway,above,lbl,blue!65!black]{transparent AMF transport};
\draw[ours] (9.0,2.65) to[out=20,in=-20,looseness=6] (9.0,2.25);
\node[ourslbl,left] at (8.75,2.45) {decode $g(m_1,\ldots,m_N)\rightarrow$ service cell};
\node[lbl,left,black!60] at (8.75,1.95) {test feasible tuple $m\in\mathcal F$; otherwise flag/fallback};
\node[lbl,black,anchor=west] at (0,0.9) {\textbf{---} existing 3GPP positioning procedure};
\node[ourslbl,anchor=west] at (0,0.4) {\textbf{---} proposed function-aware reporting operations};
\end{tikzpicture}}
  \caption{Integration of the proposed mechanism into the existing NRPPa UL-TDOA procedure. Black arrows indicate existing positioning signaling, while blue arrows and labels indicate the proposed reporting, decoding, and anomaly-screening operations.}
  \label{fig:flow}
\end{figure}

\subsection{Protocol semantics and safeguards}
\label{sec:prot-3-safe-gurd}

The color index is not a UL-RTOA value and should not be inserted into a standardized UL-RTOA field while retaining UL-RTOA semantics. A defensible prototype therefore uses an experimental extension of the Measurement Report message, implemented in a modified OAI gNB and LMF, to carry at least a codebook identifier, color index, and status/fallback flag. The existing NRPPa procedure and transparent AMF transport can remain unchanged, but both endpoints require the experimental extension logic.

Three safeguards are required. First, an unknown or stale codebook identifier triggers native reporting to prevent decoding with an incorrect codebook. Second, a local symbol outside the provisioned alphabet triggers fallback, while a complete report tuple outside the certified feasible set triggers an LMF-side flag or fallback. Third, codebook activation must be coordinated through O\&M or an LMF-controlled procedure. A TRP cannot independently omit or reinterpret a standardized measurement. Packet savings must also be evaluated after ASN.1/PER serialization. The quantity $\log_2{K_i}$ measures the codebook alphabet size and does not directly determine the byte reduction of an existing fixed-width IE.

\subsection{Data path and staged proof of concept}

Published OAI work provides the positioning transaction, LMF path, and SRS telemetry extraction required for a prototype~\cite{malik2024concept, mundlamuri2024tools}. Our software setup is based on the official OAI RAN and OAI-CN5G LMF repositories~\cite{oai_ran_repo}. The controlled coding-layer replay is reported in Sec.~\ref{sec:evaluation}, and the measured six-base-station downlink-TDoA support-transfer check is reported in Sec.~\ref{sec:trace-transfer}. The remaining stages are synchronized OAI UL-RTOA trace replay, multi-TRP RFsimulator operation with controlled delays, and O-RAN/OTA evaluation of synchronization, multipath, NLOS, and geometry~\cite{ahadi20235gnr, ahadi2025experimental, ahadi2026tcml}.

We also exercised the extraction-to-lookup path on a single gNB--UE OAI RFsimulator link. A Timing Advance (TA) value from a completed Random Access procedure was converted to the native-bin representation and passed through one validated lookup table (Table~\ref{tab:ta_poc}). Because TA is not UL-RTOA and the single-link setup has no positioning geometry, this check validates \emph{only software extraction and lookup}, not end-to-end positioning or NRPPa compression.
\begin{table}[h]
\centering
\caption{TA sample from a completed OAI RFsimulator random-access procedure, quantized with the native-bin scheme of Sec.~\ref{sec:evaluation}.}
\label{tab:ta_poc}
\scriptsize
\begin{tabular}{@{}lr@{}}
\toprule
Quantity & Value \\
\midrule
Raw TA (RAR, TS~38.213 \S4.2) & 31 \\
Granularity ($\mu=1$) & 39.1~m/unit (one-way) \\
One-way distance equivalent & 1212.1~m \\
Native bin index (of 120) & 102 \\
Color index (TRP~2, $\Delta=100$~m codebook, $K_2=7$) & 3 \\
\bottomrule
\end{tabular}
\vspace{-0.5cm}
\end{table}

\subsection{Timing model, provisioning, and remaining validation}
The controlled study uses synchronized range-equivalent values to isolate the coding layer. A real uplink timing observation is
\begin{align}
\label{eq:toa-model}
 t_i=t_0+\frac{\|p-a_i\|}{c_0}+b_i+n_i\ ,
\end{align}
where $t_0$ is the unknown UE transmission-time term, $b_i$ is residual TRP clock/receiver bias, and $n_i$ contains estimation and propagation error. The encoder can operate directly on the native quantized UL-RTOA field, while $\ell$ retains the ordinary LMF differencing, compensation, positioning, and grid projection.

A deployed codebook is tied to the measurement definition, active TRPs, support model, service function, and grid resolution. A versioned object contains the local lookup tables, decoder, feasible set, validity interval, and integrity value; unknown or mixed versions trigger native reporting. Online operation requires one encoder lookup, one decoder lookup, and feasible-set testing. End-to-end validation should also report synchronized uplink epochs, ASN.1/PER bytes, latency, fallback frequency, and raw/compressed agreement on the certified support.

\subsection{Implementation footprint and reproducibility}
\label{sec:imple-foot-reproduc}

The construction is performed offline, while online reporting is table based. For one terminal update, support points are first grouped by the current report tuple of the other terminals. If the equal-context groups have sizes $n_1,n_2,\ldots$, graph construction examines only within-group pairs, requiring $O(\sum_q n_q^2)$ candidate comparisons rather than a direct $O(|\Ssupp|^2)$ scan over all support pairs. Each accepted update is followed by an $O(|\Ssupp|)$ complete decoder check. During operation, a TRP performs one array lookup, and the LMF performs one sparse decoder lookup followed by feasible-set membership testing.

The stored state is also sparse. For example, at $\Delta=100$~m, the three local codebooks contain $120$ native-bin entries each, while the validated report alphabet has $(K_1,K_2,K_3)=(103,7,103)$. Only $|\Fset|=12{,}180$ of the $103\times7\times103=74{,}263$ possible color tuples are reachable and need to be retained in the sparse decoder/consistency table. Table~\ref{tab:implementation-status} separates the evidence completed in this work from the remaining protocol step.

\begin{table}[t]
\centering
\caption{Implemented evidence and remaining integration step.}
\label{tab:implementation-status}
\scriptsize
\setlength{\tabcolsep}{3.0pt}
\renewcommand{\arraystretch}{1.05}
\begin{tabular}{@{}p{0.22\columnwidth}p{0.70\columnwidth}@{}}
\toprule
Component & Evidence/status \\
\midrule
Offline coding & All six terminal orders evaluated; every reported three-TRP codebook passes a complete decoder-conflict scan on $12{,}772$ support tuples.\\
Measured replay & Six synchronized BSs, $18{,}863$ training bursts, and $15{,}722$ independent test bursts; native-support recurrence is $83.77\%$.\\
OAI lookup & Completed Random Access TA sample converted to native bin $102$ and mapped to color $3$ using the validated $\Delta=100$~m TRP~2 table.\\
Protocol step & Experimental NRPPa transport of codebook ID, color index, and fallback/status, including ASN.1/PER byte accounting, remains to be implemented.\\
\bottomrule
\end{tabular}
\end{table}

\section{Static Geometric Anomaly Screen}
\label{sec:anomaly}
Once the codebooks are fixed, define the feasible report-tuple set $\Fset\triangleq\mathbf{c}(\Ssupp)=\{\mathbf{c}(x):x\in\Ssupp\}$. The zero-error requirement constrains the decoder only on $\Fset$, and its values outside $\Fset$ may be assigned arbitrarily.  Operationally, however, the LMF screens those unreachable report tuples before normal decoding. A received tuple outside $\Fset$ cannot arise from the design support and is flagged without trajectory history. It may reveal a codebook mismatch, receiver fault, large NLOS bias, stale report, or unsophisticated measurement injection. It cannot detect anomalies within $\Fset$ and therefore serves only as a consistency screen, not a complete intrusion detector.

The screen captures several practical anomaly modes. A positive timing bias models NLOS excess delay or receiver calibration drift. Replacing a report with one from a previous epoch models stale telemetry or an association error, while using a different codebook version models control-plane desynchronization. Missing, duplicated, or out-of-alphabet reports are detected before geometric membership testing and are treated as explicit faults. An adversary that knows $\Fset$ may still select a feasible tuple and bypass the static screen. Detecting such cases requires dynamic mobility checks or physical-layer authentication.


We draw $30,000$ held-out UE positions and add a positive range-equivalent bias to TRP~1 before quantization. The same held-out locations are reused across bias levels, and biased measurements are saturated at the largest native bin. Fig.~\ref{fig:anomaly} reports the empirical flag probability. With no injected bias, the nominal support-miss rates are $2.06\%$, $1.70\%$, and $0.32\%$ for $\Delta=5$, $100$, and $500$ m. These values arise from finite support discretization and should not be interpreted as RF false-alarm rates. At a $24$ m bias, the corresponding flag probabilities are $53.9\%$, $39.2\%$, and $4.9\%$.

\begin{figure}[t]
  \centering
  \includegraphics[width=0.82\columnwidth]{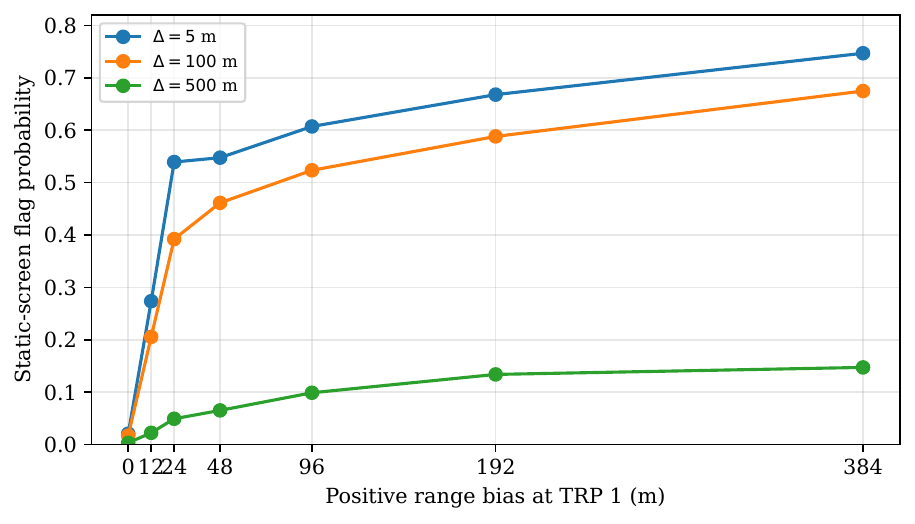}
  \caption{Held-out static-screen response to a positive range-equivalent bias at one TRP. The zero-bias point is the nominal support-miss rate.}
  \label{fig:anomaly}
\end{figure}

The experiment exposes a rate--observability tradeoff: coarser service functions reduce both reporting rate and sensitivity to biased measurements. Because the finite-grid support is deliberately strict, the zero-bias point is a support-miss rate rather than an RF false-alarm probability. A deployment should enlarge $\Ssupp$ using calibrated timing uncertainty or high-coverage traces and then calibrate $P_{\rm FA}=P(\mathrm{flag}\mid\mathrm{nominal})$ and
$P_{\rm D}=P(\mathrm{flag}\mid\mathrm{fault})$ for the relevant fault model. Dynamic trajectory checks can follow this inexpensive single-epoch screen when feasible-but-suspicious sequences must also be detected.

\section{Discussion and Conclusion}
\label{sec:conclu}

\subsection{Scope and deployment implications}
Zero-error refers to recovery of the service function $\ell$, not the true physical UE position: on the certified support, compressed and native reports produce the same grid-cell decision, while localization accuracy remains determined by radio conditions, synchronization, geometry, receiver processing, and the estimator. The construction applies to other finite local measurements, including UL-RTOA, AoA, RSRP, or jointly quantized features. Larger TRP sets increase support and decoder size, which can be controlled through sampling, uncertainty envelopes, symbol pruning, and repeated-context reuse, provided the final codebooks pass the global decoder test. Static codebooks remain valid only while the active TRPs, quantizer, support model, and service function are unchanged; other changes require a coordinated version update and safe fallback.

\subsection{Conclusion and Future Directions}
\label{sec:con}
We developed support-aware function coding for distributed 5G positioning telemetry. The global product-partition criterion and conditional conflict graph preserve the LMF service decision on the declared support. In the controlled three-TRP study, the method reduces the ideal reporting rate by $21.3$--$22.1\%$ for grids up to $100$~m, with zero decoder conflicts. The same feasible report set provides a static consistency screen and maps naturally to the UE--TRP/gNB--AMF--LMF data path.

In measured six-base-station TDoA traces, $83.77\%$ of independent test bursts reproduce a training-support tuple and therefore inherit exact service-decision preservation; the corresponding codebooks reduce ideal rate by $22.8$--$35.6\%$ for $2$--$8$~m grids. The remaining limitations are that the measured data are downlink TDoA rather than NRPPa UL-RTOA, DSATUR is heuristic, and the protocol extension is not yet implemented end to end. Future work will enlarge the declared support using calibrated uncertainty and additional traces, measure serialized bytes and latency, and validate synchronized multi-TRP OAI operation.

\bibliographystyle{IEEEtran}
\bibliography{ref}

\end{document}